
%
%


\documentclass[11pt,letterpaper]{article}

   
%
\usepackage{amsmath,amsfonts,amsthm,amssymb,relsize,bm}  
\topmargin -3pc 

\theoremstyle{plain}

\newcommand{\complex}{{\mathbb C}}
\newcommand{\real}{{\mathbb R}}

\numberwithin{equation}{section}

\newtheorem{thm}{Theorem}[section]
\newtheorem{lem}[thm]{Lemma}
\newtheorem{cor}[thm]{Corollary}

\theoremstyle{definition}  
\newtheorem{exam}{Example}  

\allowdisplaybreaks  

\newcounter{cond}

\newcommand{\tbullet}{\mathrel{\raise .4ex\hbox{\tiny$\bullet$}}} 

\newcommand{\rmtr}{\mathrm{tr\,}}

\newcommand{\escript}{\mathcal{E}}

\newcommand{\lscript}{\mathcal{L}}

\newcommand{\sscript}{\mathcal{S}}

\newcommand{\abar}{\overline{a}}
\newcommand{\bbar}{\overline{b}}

\newcommand{\ab}[1]{\left|#1\right|}
\newcommand{\doubleab}[1]{\left\|#1\right\|}
\newcommand{\brac}[1]{\left\{#1\right\}}
\newcommand{\paren}[1]{\left(#1\right)}
\newcommand{\sqbrac}[1]{\left[#1\right]}
\newcommand{\elbows}[1]{{\left\langle#1\right\rangle}}
\newcommand{\ket}[1]{{\left|#1\right>}}
\newcommand{\bra}[1]{{\left<#1\right|}}

\errorcontextlines=0

\begin{document}

\title{MUTUALLY UNBIASED\\QUANTUM OBSERVABLES}
\author{Stan Gudder\\ Department of Mathematics\\
University of Denver\\ Denver, Colorado 80208\\
sgudder@du.edu}
\date{}
\maketitle

\begin{abstract}
We begin by defining mutually unbiased (MU) observables on a finite dimensional Hilbert space. We also consider the more general concept of parts of MU observables. The relationships between MU observables, value-complementary observables and two other conditions involving sequential products of observables are discussed. We next present a special motivating case of MU observables called finite position and momentum observables. These are atomic observables related by a finite Fourier transform. Finite position and momentum observables are employed to give examples of parts of MU observables that are value-complementary and those that are not value-complementary. Various open problems involving these concepts are presented. These problems mainly involve extending this work from sharp observables to unsharp observables.

\end{abstract}

\section{Basic Definitions}  
In this work we consider quantum systems represented by a finite-dimensional complex Hilbert space $H$. Although this is a strong restriction, the resulting framework is general enough to include theories of quantum computation, communication and information \cite{iva81,kw20,nc00}. We denote the set of linear operators on $H$ by $\lscript (H)$. An element $A\in\lscript (H)$ is \textit{positive} and we write $A\ge 0$ if $\elbows{\psi, A\psi}\ge 0$ for all $\psi\in H$. If $A-B\ge 0$ we write $B\le A$ or $A\ge B$ and if $0\le A\le I$ where $0,I$ are the zero and identity operators, respectively, we call $A$ an \textit{effect} \cite{bgl95,hz12,kra83,nc00}. An effect $A$ represents a two-outcome or yes-no experiment and when $A$ produces outcome yes we say that $A$ \textit{occurs}. We denote the set of effects on $H$ by $\escript (H)$. Then $(\escript (H),\le\,)$ becomes a partially ordered set with first and last elements $0,I$, respectively. For $A\in\escript (H)$ we call $A'=I-A$ the \textit{complement} of $A$ and, of course, $A+A'=I$. The effect $A'$ occurs if and only if the effect $A$ does not occur. We call an effect A \textit{sharp} if $A$ is a projection and we call $A$ \textit{atomic} if $A$ is a one-dimensional projection \cite{bgl95,hz12}.

An effect $\rho$ satisfying $\rmtr (\rho )=1$ is called a \textit{state} and states represent initial conditions of a quantum system. The set of states on $H$ is denoted by $\sscript (H)$ and for $A\in\escript (H)$, $\rho\in\sscript (H)$ the probability that $A$ occurs when the system is in state $\rho$ is defined as $\rmtr (\rho A)$. For $A,B\in\escript (H)$ we define their \textit{sequential product} by $A\circ B=A^{1/2}BA^{1/2}$, where $A^{1/2}$ is the unique positive square-root of $A$ \cite{gg02,gn01}. It is easy to check that $A\circ B\in\escript (H)$ which we interpret as the effect resulting from first measuring $A$ and then measuring $B$. It follows that $A$ can influence the measurement of $B$ but not vice versa. It can be shown that
$A\circ B=B\circ A$ if and only if $AB=BA$ \cite{gn01}.

An \textit{observable} for a quantum system is represented by a finite set of effects
$A=\brac{A_x\colon x\in\Omega _A}\subseteq\escript (H)$ where $\Omega _A$ is a finite set called the \textit{outcome space} for $A$ and
$\sum\limits _{x\in\Omega _A}A_x=I$. We interpret $A_x\in A$ as the effect that occurs when a measurement of $A$ results in the outcome $x$. If $\rho\in\sscript (H)$ we have that
\begin{equation*} 
\sum _{x\in\Omega _A}\rmtr (\rho A_x)=\rmtr (\rho )=1
\end{equation*}
so at least one of the effects $A_x$ occurs or equivalently, at least one of the outcomes $x\in\Omega _A$ results. The \textit{distribution} of an observable $A=\brac{A_x\colon x\in\Omega _A}$ in the state $\rho$ is the probability measure $\Phi _\rho ^A$ on $\Omega _a$ given by
$\Phi _\rho ^A(x)=\rmtr (\rho A_x)$. If $A=\brac{A_x\colon x\in\Omega _A}$, $B=\brac{B_y\colon y\in\Omega _B}$ are observables, we define their 
\textit{sequential product} $A\circ B$ to be the observable with outcomes space $\Omega _{A\circ B}=\Omega _A\times\Omega _B$ given by
\cite{gud120,gud220}
\begin{equation*} 
A\circ B=\brac{A_x\circ B_y\colon (x,y)\in\Omega _A\times\Omega _B}
\end{equation*}
so that $A\circ B_{(x,y)}=A_x\circ B_y$. The distribution of $A\circ B$ in the state $\rho$ becomes
\begin{equation*} 
\Phi _\rho ^{A\circ B}(x,y)=\rmtr (\rho A_x\circ B_y)=\rmtr\sqbrac{(A_x\circ\rho )B_y}
\end{equation*}
We also define the observable $B$ \textit{conditioned} by the observable $A$ to have outcome space $\Omega _B$ and to be given by
\cite{gud120,gud220}
\begin{equation*} 
(B\mid A)_y=\sum _{x\in\Omega _A}A_x\circ B_y=\sum _{x\in\Omega _A}A\circ B_{(x,y)}
\end{equation*}
We then have that
\begin{equation*} 
\Phi _\rho ^{(B\mid A)}(y)=\sum _{x\in\Omega _A}\Phi _\rho ^{A\circ B}(x,y)
\end{equation*}
It is easy to check that $A\circ B$ and $(B\mid A)$ are indeed observables. We say that an observable $A$ is \textit{sharp} if all of its effects $A_x$ are sharp and $A$ is \textit{atomic} if all of its effects $A_x$ are atomic. If $A$ and $B$ are sharp, then $A\circ B$ need not be sharp. Indeed, in this case, $A\circ B_{(x,y)}=A_xB_yA_x$ and simple examples show this may not be a projection. Also,
$(B\mid A)_y=\sum\limits _{x\in\Omega _A}A_xB_yA_x$ need not be a projection so $(B\mid A)$ need not be sharp.

An observable $B$ is \textit{part} of an observable $A$ and we write $B\subseteq A$ if there exists a surjection $f\colon\Omega _A\to\Omega _B$ such that $B_y=A_{f^{-1}(y)}$ for all $y\in\Omega _B$ \cite{fhl18,gud120}. In this case, $B_y=\sum\brac{A_x\colon f(x)=y}$. Notice that $B$ is indeed an observable because
\begin{equation*} 
\sum _{y\in\Omega _B}B_y=\sum _{y\in\Omega _B}\brac{A_x\colon f(x)=y}=\sum _{x\in\Omega _A}A_x=I
\end{equation*}
We also write $B=f(A)$. The distribution of $f(A)$ becomes
\begin{align*} 
\Phi _\rho ^{f(A)}(y)&=\rmtr\sqbrac{\rho f(A)_y}=\rmtr\sqbrac{\rho A_{f^{-1}(y)}}=\rmtr\sqbrac{\rho\sum\brac{A_x\colon f(x)=y}}\\
   &=\sum\brac{\rmtr (\rho A_x)\colon f(x)=y}
\end{align*}
Notice that if $B$ is part of a sharp observable, then $B$ is sharp. However, if $B\subseteq A$ and $A$ is atomic, then $B$ need not be atomic. Two observables $B,C$ \textit{coexist} if there exists an observable $A$ such that $B\subseteq A$, $C\subseteq A$ \cite{bgl95,fhl18,hz12,kra83}. In this case, $B$ and $C$ can be simultaneously measured by employing the observable $A$. It is not hard to show that $(B\mid A)$ and $A$ coexist \cite{gud120,gud220}.

Two orthonormal bases $\brac{\phi _i}$, $\brac{\psi _j}$ for $H$ are \textit{mutually unbiased} (MU) if $\ab{\elbows{\phi _i,\psi _j}}^2=1/d$ for all $i,j$ where $d=\dim H$ \cite{bgl95,iva81,wf84}. In this case, there is no bias between the two bases because the transition probabilities from one to the other are identical. Two atomic observables $A=\brac{\ket{\phi _i}\bra{\phi _i}\colon i=1,\ldots ,d}$,
$B=\brac{\ket{\psi _j}\bra{\psi _j}\colon j=1,\ldots ,d}$ are \textit{mutually unbiased} (MU) if $\brac{\phi _i}$, $\brac{\psi _j}$ are mutually unbiased. Two observables $A,B$ are \textit{part} MU if there exist MU observables $C,D$ such that $A\subseteq C$, $B\subseteq D$. Although part MU observables are sharp, they need not be atomic. Let $A=\brac{A_x\colon x\in\Omega _A}$,
$B=\brac{B_y\colon y\in\Omega _B}$ be observables on $H$ with $\ab{\Omega _A}=m$, $\ab{\Omega _B}=n$. We say that
$A$ and $B$ are \textit{value-complementary} \cite{hz12} if when $\Phi _\rho ^A(x)=1$, then $\Phi _\rho ^B(y)=1/n$ for every
$y\in\Omega _B$ and when $\Phi _\rho ^B(y)=1$, then $\Phi _\rho ^A(x)=1/m$ for every $x\in\Omega _A$. This condition says that if $A$ has an outcome $x$ with certainty in the state $\rho$ then $B$ is completely random (uncertain, undetermined) in the state $\rho$ and similarly, if $B$ has an outcome $y$ with certainty in the state $\rho$, then $A$ is completely random in the state $\rho$. This definition is motivated by similar properties possessed by complementary position and momentum observables in continuum quantum mechanics.

\begin{lem}    
\label{lem11}
Two observables $A=\brac{A_x\colon x\in\Omega _A}$, $B=\brac{B_y\colon y\in\Omega _B}$ are value-complementary if and only if
$\elbows{\psi ,A_x\psi}=1$ for $\doubleab{\psi}=1$ implies that $\elbows{\psi ,B_y\psi}=1/n$ for all $y\in\Omega _B$ and $\elbows{\psi ,B_y\psi}=1$ for 
$\doubleab{\psi}=1$ implies that $\elbows{\psi ,A_x\psi}=1/m$ for all $x\in\Omega _A$.
\end{lem}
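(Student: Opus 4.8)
The plan is to observe that the condition in the lemma is exactly the definition of value-complementarity restricted to \emph{pure} states. Indeed, for a unit vector $\psi$ the rank-one operator $\rho=\ket{\psi}\bra{\psi}$ is a state, and $\rmtr (\rho A_x)=\elbows{\psi ,A_x\psi}$, so $\Phi_\rho^A(x)=\elbows{\psi ,A_x\psi}$ and likewise for $B$. Thus the forward implication is immediate: if $A$ and $B$ are value-complementary for \emph{all} states $\rho$, then the conclusion holds in particular for these pure states.

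The substance is in the converse. Suppose the pure-state condition holds and let $\rho\in\sscript (H)$ satisfy $\Phi_\rho^A(x)=\rmtr (\rho A_x)=1$ for some $x$. First I would write the spectral decomposition $\rho=\sum_i\lambda_i\ket{\psi_i}\bra{\psi_i}$ with $\lambda_i\ge 0$, $\sum_i\lambda_i=1$, and $\brac{\psi_i}$ orthonormal, so that $\rmtr (\rho A_x)=\sum_i\lambda_i\elbows{\psi_i,A_x\psi_i}$.

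The key step is a short convexity argument: since $0\le A_x\le I$, each $\elbows{\psi_i,A_x\psi_i}$ lies in $[0,1]$, and then $\sum_i\lambda_i\paren{1-\elbows{\psi_i,A_x\psi_i}}=0$ forces $\elbows{\psi_i,A_x\psi_i}=1$ for every index $i$ with $\lambda_i>0$. For each such $\psi_i$, which is a unit vector, the hypothesis yields $\elbows{\psi_i,B_y\psi_i}=1/n$ for all $y\in\Omega_B$. Reassembling, $\Phi_\rho^B(y)=\rmtr (\rho B_y)=\sum_i\lambda_i\elbows{\psi_i,B_y\psi_i}=(1/n)\sum_i\lambda_i=1/n$, where the indices with $\lambda_i=0$ contribute nothing. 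The symmetric computation, interchanging the roles of $A$ and $B$ together with $m$ and $n$, gives the other half of value-complementarity, which completes the equivalence.

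I expect the only real obstacle to be stating the convexity step cleanly, since that is where the restriction from all states to pure states is recovered; everything else is routine manipulation of the spectral decomposition and the trace.
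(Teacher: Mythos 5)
Your proposal is correct and follows essentially the same route as the paper: the forward direction by specializing to pure states $\rho=\ket{\psi}\bra{\psi}$, and the converse via the spectral decomposition of $\rho$ together with the observation that $\rmtr(\rho A_x)=1$ forces $\elbows{\psi_i,A_x\psi_i}=1$ on every eigenvector with positive weight. Your explicit convexity step $\sum_i\lambda_i\paren{1-\elbows{\psi_i,A_x\psi_i}}=0$ is in fact a welcome elaboration of a point the paper passes over with ``It follows that.''
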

\begin{proof}
If $A,B$ are value-complementary and $\elbows{\psi ,A_x\psi}=1$ for $\doubleab{\psi}=1$, letting $\rho =\ket{\psi}\bra{\psi}$ we have that
\begin{equation*}
\Phi _\rho ^A(x)=\rmtr (\rho A_x)=\elbows{\psi ,A_x\psi}=1
\end{equation*}
which implies that
\begin{equation*}
\elbows{\psi ,B_y\psi}=\rmtr(\rho B_y)=\Phi _\rho ^B(y)=\frac{1}{n}
\end{equation*}
for all $y\in\Omega _B$. Similarly, if $\elbows{\psi ,B_y\psi}=1$, then $\elbows{\psi ,A_x\psi}=1/m$ for all $x\in\Omega _A$. Conversely, Suppose the two conditions of the lemma hold and $\rmtr (\rho A_x)=1$ for $\rho\in\sscript (H)$. By the spectral theorem we can write $\rho =\sum\lambda _iP_i$,
$\lambda _i>0$ and $P_i=\ket{\phi _i}\bra{\phi _i}$. Then
\begin{equation*}
1=\rmtr (\rho A_x)=\rmtr\paren{\sum\lambda _i\ket{\phi _i}\bra{\phi _i}A_x}=\sum\lambda _i\elbows{\phi _iA_x\phi _i}
\end{equation*}
It follows that $\elbows{\phi _i,A_x\phi _i}=1$. Hence, $\elbows{\phi _i,B_y\phi _i}=1/n$ for all $i,y$. But then
\begin{equation*}
\rmtr (\rho B_y)=\sum\lambda _i\elbows{\phi _i,B_y\phi _i}=\sum\frac{1}{n}\,\lambda _i=\frac{1}{n}
\end{equation*}
The other condition is similar.
\end{proof}

\section{Value-Complementary Observables}  
We begin this section with two rather unusual conditions. Let  $A=\brac{A_x\colon x\in\Omega _A}$, $B=\brac{B_y\colon y\in\Omega _B}$ be two observables on $H$ where the cardinalities $\ab{\Omega _A}=m$, $\ab{\Omega _B}=n$. We define
\begin{list} {\hskip 1pc Condition (\arabic{cond})}{\usecounter{cond}
\setlength{\rightmargin}{\leftmargin}}
\item $A\circ B=\tfrac{1}{n}\,A$ and $B\circ A=\tfrac{1}{m}\,B$,
\item $(B\mid A)=\tfrac{1}{n}\,I$ and $(A\mid B)=\tfrac{1}{m}$\,I.
\end{list}\smallskip
Condition~(1) is shorthand for $A_x\circ B_y=\tfrac{1}{n}\,A_x$ and $B_y\circ A_x=\tfrac{1}{m}B_y$ for all $x\in\Omega _A$, $y\in\Omega _B$ and Condition~(2) is shorthand for $(B\mid A)_y=\tfrac{1}{n}\,I$ and $(A\mid B)_x=\tfrac{1}{m}\,I$ for all $x\in\Omega _A$, $y\in\Omega _B$. Condition~(1) says that $B$ uniformly attenuates $A$ and $A$ uniformly attenuates $B$. Condition~(2) says that $A$ and $B$ are conditionally random relative to each other. The next result gives a relationship between these two conditions and whether $A,B$ are value-complementary.

\begin{thm}    
\label{thm21}
{\rm{(i)}}\enspace Condition~(1) implies Condition~(2).
{\rm{(ii)}}\enspace If $A$ and $B$ are sharp, then Condition~(1) and Condition~(2) are equivalent.
{\rm{(iii)}}\enspace If $A$ and $B$ are sharp, then Condition~(1) or Condition~(2) imply that $A,B$ are value-complementary.
\end{thm}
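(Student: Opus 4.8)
The plan is to handle the three parts in order, leaning throughout on the fact that a sharp observable consists of mutually orthogonal projections. Part~(i) needs no hypothesis and follows immediately from the definition of the conditioned observable: assuming Condition~(1), I would sum $A_x\circ B_y=\tfrac1n A_x$ over $x$ to get $(B\mid A)_y=\sum_x A_x\circ B_y=\tfrac1n\sum_x A_x=\tfrac1n I$, using $\sum_x A_x=I$, and symmetrically $(A\mid B)_x=\tfrac1m I$. So (i) is a two-line computation.

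For part~(ii), since (i) already supplies Condition~(1)$\Rightarrow$Condition~(2), it remains to prove the reverse implication under sharpness. The key preliminary observation is that projections $\brac{A_x}$ with $\sum_x A_x=I$ are mutually orthogonal: for $x\ne x'$ one has $A_x+A_{x'}\le I$, and testing against a vector $\psi$ in the range of $A_x$ forces $\elbows{\psi,A_{x'}\psi}=0$, whence $A_xA_{x'}=0$. Granting this, I would start from $(B\mid A)_y=\sum_x A_xB_yA_x=\tfrac1n I$ (writing $A_x^{1/2}=A_x$ for sharp effects) and conjugate by a fixed $A_{x_0}$. Because $A_{x_0}A_x=\delta_{x_0 x}A_{x_0}$, the sum collapses to the single surviving term and yields $A_{x_0}B_yA_{x_0}=\tfrac1n A_{x_0}$, i.e.\ $A_{x_0}\circ B_y=\tfrac1n A_{x_0}$. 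The symmetric computation with the orthogonal projections $\brac{B_y}$ recovers $B_y\circ A_x=\tfrac1m B_y$, giving Condition~(1).

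For part~(iii), by (ii) it suffices to assume Condition~(1) and invoke Lemma~\ref{lem11}, which reduces value-complementarity to a statement about unit vectors. Suppose $\elbows{\psi,A_x\psi}=1$ with $\doubleab{\psi}=1$. Since $A_x$ is a projection, the identity $\doubleab{\psi}^2=\doubleab{A_x\psi}^2+\doubleab{(I-A_x)\psi}^2$ forces $A_x\psi=\psi$. Then for each $y$, $\elbows{\psi,B_y\psi}=\elbows{A_x\psi,B_yA_x\psi}=\elbows{\psi,A_xB_yA_x\psi}=\tfrac1n\elbows{\psi,A_x\psi}=\tfrac1n$. The dual computation, beginning from $\elbows{\psi,B_y\psi}=1$ and using $B_y\circ A_x=\tfrac1m B_y$, gives $\elbows{\psi,A_x\psi}=\tfrac1m$. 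Lemma~\ref{lem11} then certifies value-complementarity.

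The main obstacle — really the only non-formal step — is the mutual orthogonality of the effects of a sharp observable used in part~(ii); this is what lets the sum defining $(B\mid A)_y$ collapse under conjugation by a single $A_{x_0}$, and it is precisely the reason the equivalence genuinely requires sharpness rather than holding for arbitrary effects. Everything else is bookkeeping with the definitions of $\circ$ and of the conditioned observable, together with the reduction afforded by Lemma~\ref{lem11}.
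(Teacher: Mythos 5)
Your proposal is correct and follows essentially the same route as the paper: sum over $x$ for (i), collapse the sum $\sum_x A_xB_yA_x=\tfrac1n I$ by multiplying with a fixed projection for (ii), and use $A_x\psi=\psi$ together with Lemma~\ref{lem11} for (iii). The only difference is that you make explicit (and justify) the mutual orthogonality of the projections of a sharp observable, which the paper uses silently when it multiplies equation \eqref{eq21} on the left by $A_z$; that is a welcome clarification rather than a new idea.
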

\begin{proof}
(i)\enspace If Condition~(1) holds, we conclude that
\begin{align*}
(B\mid A)_y&=\sum\brac{A_x\circ B_y\colon x\in\Omega _A}=\sum\brac{\frac{1}{n}\,A_x\colon x\in\Omega _A}=\frac{1}{n}\,I
\intertext{and}
(A\mid B)_x&=\sum\brac{B_y\circ A_x\colon y\in\Omega _B}=\sum\brac{\frac{1}{m}\,B_y\colon y\in\Omega _B}=\frac{1}{m}\,I
\end{align*}
Hence, Condition~(2) holds.\newline
(ii)\enspace Suppose $A$ and $B$ are sharp. By (i) we have that Condition~(1) implies Condition~(2). If Condition~(2) holds, then for all
$y\in\Omega _B$ we obtain
\begin{equation}                
\label{eq21}
\sum\brac{A_xB_yA_x\colon x\in\Omega _A}=\sum\brac{A_x\circ B_y\colon x\in\Omega _A}=(B\mid A)_y=\frac{1}{n}\,I
\end{equation}
Multiplying \eqref{eq21} on the left by $A_z$ gives
\begin{equation*} 
A_z\circ B_y=A_zB_yA_z=\frac{1}{n}\,A_z
\end{equation*}
for all $z\in\Omega _A$, $y\in\Omega _B$. In a similar way, we obtain $B_y\circ A_x=\tfrac{1}{m}\,B_y$ for all $x\in\Omega _A$,
$y\in\Omega _B$ so Condition~(2) holds. We conclude that Conditions~(1) and (2) are equivalent.\newline
(iii)\enspace Suppose Condition~(1) holds. If $\elbows{\psi ,A_x\psi}=1$ for $\doubleab{\psi}=1$, we have that
\begin{equation*} 
\elbows{\psi ,A_x\circ B_y\psi}=\frac{1}{n}\elbows{\psi ,A_x\psi}=\frac{1}{n}
\end{equation*}
for all $y\in\Omega _B$. Since $A_x$ is sharp, $\elbows{\psi ,A_x\psi}=1$ for $\doubleab{\psi}=1$ implies that $A_x\psi =\psi$. Hence, for all
$y\in\Omega _B$ we obtain
\begin{equation*} 
\elbows{\psi ,B_y\psi}=\elbows{A_x\psi ,B_yA_x\psi}=\elbows{\psi ,A_x\circ B_y\psi}=\frac{1}{n}
\end{equation*}
In a similar way, if $\elbows{\psi ,B_y\psi}=1$ for $\doubleab{\psi}=1$, we obtain $\elbows{\psi ,A_x\psi}=1/m$ for all $x\in\Omega _A$. Applying
Lemma~\ref{lem11}, we conclude that $A,B$ are value-complementary. If Condition~(2) holds, then by (ii), Condition~(1) holds so again $A,B$ are value-complementary.
\end{proof}

We do not know whether the converse of Theorem~\ref{thm21}(i) or (ii) holds. Part (ii) of the next result strengthens Theorem~\ref{thm21}(iii).

\begin{thm}    
\label{thm22}
{\rm{(i)}}\enspace Condition~(1) holds if and only if for all $x\in\Omega _A$ and mutually orthogonal unit eigenvectors $\phi _1,\phi _2$ of $A_x$ with nonzero eigenvalues, we have that $\elbows{\phi _1,B_y\phi _2}=0$ and $\elbows{\phi _1,B_y\phi _1}=1/n$ for every $y\in\Omega _B$. Moreover, for mutually orthogonal unit eigenvectors $\psi _1,\psi _2$ of $B_y$ with nonzero eigenvalues, we have that $\elbows{\psi _1,A_x\psi _2}=0$ and
$\elbows{\psi _1,A_x\psi _1}=1/m$ for every $x\in\Omega _A$.
{\rm{(ii)}}\enspace If Condition~(1) holds, then $A,B$ are value-complementary.
\end{thm}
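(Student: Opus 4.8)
The plan is to prove part (i) first and then obtain part (ii) as an easy consequence via Lemma~\ref{lem11}. Recall that Condition~(1) asserts $A_x\circ B_y=A_x^{1/2}B_yA_x^{1/2}=\tfrac1n A_x$ and $B_y\circ A_x=B_y^{1/2}A_xB_y^{1/2}=\tfrac1m B_y$ for all $x,y$. I would fix $x$ and $y$ and analyze the first identity by passing to the spectral decomposition of the effect $A_x$. Choosing an orthonormal eigenbasis $\brac{e_k}$ of $A_x$ with eigenvalues $\lambda_k\ge 0$, and using $A_x^{1/2}e_k=\sqrt{\lambda_k}\,e_k$, one computes the matrix elements of both sides: $\elbows{e_j,A_x^{1/2}B_yA_x^{1/2}e_k}=\sqrt{\lambda_j\lambda_k}\,\elbows{e_j,B_ye_k}$, while $\elbows{e_j,\tfrac1n A_xe_k}=\tfrac{\lambda_k}{n}\,\delta_{jk}$.

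Comparing these entry by entry is the heart of the argument. The diagonal entries with $\lambda_k>0$ give $\elbows{e_k,B_ye_k}=1/n$, and the off-diagonal entries with $\lambda_j,\lambda_k>0$ give $\elbows{e_j,B_ye_k}=0$; entries involving a zero eigenvalue carry a factor $\sqrt{\lambda_j\lambda_k}$ on the left and $\lambda_k\delta_{jk}$ on the right, so both vanish automatically and impose no constraint. Running the same computation for an arbitrary pair of mutually orthogonal unit eigenvectors $\phi_1,\phi_2$ with nonzero eigenvalues (in place of basis vectors) yields exactly $\elbows{\phi_1,B_y\phi_2}=0$ and $\elbows{\phi_1,B_y\phi_1}=1/n$, where orthogonality kills the right-hand side in the off-diagonal case even for a degenerate eigenvalue. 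Conversely, since the stated conditions hold for \emph{all} such eigenvectors, they hold in particular for a chosen orthonormal eigenbasis, forcing agreement of every matrix element and hence the operator identity. The companion statement follows by applying the identical argument to $B_y^{1/2}A_xB_y^{1/2}=\tfrac1m B_y$. I expect the only delicate point to be the bookkeeping around the kernel of $A_x$ (and of $B_y$): one must check that conditions phrased only for nonzero eigenvalues still pin down the full operator, which they do precisely because the missing matrix elements vanish on both sides.

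For part (ii) I would invoke Lemma~\ref{lem11}. Suppose $\elbows{\psi,A_x\psi}=1$ with $\doubleab{\psi}=1$. Since $A_x$ is an effect, $I-A_x\ge 0$ and $\elbows{\psi,(I-A_x)\psi}=1-1=0$, so $\doubleab{(I-A_x)^{1/2}\psi}^2=0$ and therefore $A_x\psi=\psi$; thus $\psi$ is a unit eigenvector of $A_x$ with eigenvalue $1\neq 0$. Part~(i) then gives $\elbows{\psi,B_y\psi}=1/n$ for every $y\in\Omega_B$, and the symmetric hypothesis $\elbows{\psi,B_y\psi}=1$ yields $\elbows{\psi,A_x\psi}=1/m$ for every $x\in\Omega_A$ in the same way. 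By Lemma~\ref{lem11}, $A$ and $B$ are value-complementary. This is exactly where the improvement over Theorem~\ref{thm21}(iii) comes from: the implication $\elbows{\psi,A_x\psi}=1\Rightarrow A_x\psi=\psi$ requires only that $A_x$ be an effect, not that it be sharp, so no sharpness hypothesis is needed.
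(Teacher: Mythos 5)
Your proposal is correct and follows essentially the same route as the paper: part (i) by comparing matrix elements of $A_x^{1/2}B_yA_x^{1/2}$ and $\tfrac{1}{n}A_x$ on eigenvectors of $A_x$ (with the kernel contributions vanishing on both sides), and part (ii) by reducing to Lemma~\ref{lem11} after showing that $\langle\psi ,A_x\psi\rangle =1$ forces $A_x\psi =\psi$. The only cosmetic difference is that the paper obtains $A_x\psi =\psi$ from the equality case of the Cauchy--Schwarz inequality, whereas you use positivity of $I-A_x$ via $\|(I-A_x)^{1/2}\psi\|^2=0$; both are valid and neither requires sharpness.
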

\begin{proof}
(i)\enspace Suppose Condition~(1) holds and $\phi$ is a unit eigenvector for $A_x$ with nonzero eigenvalue $\lambda$. It follows that
\begin{equation*}
\lambda\elbows{\phi ,B_y\phi}=\elbows{A_x^{1/2}\phi ,B_yA_x^{1/2}\phi}=\elbows{\phi ,A_x\circ B_y\phi}=\frac{1}{n}\elbows{\phi ,A_x\phi}=\frac{\lambda}{n}
\end{equation*}
Hence, $\elbows{\phi ,B_y\phi}=1/n$ for all $y\in\Omega _B$. Moreover, if $\phi _1,\phi _2$ are mutually orthogonal unit eigenvectors of $A_x$ with nonzero eigenvalues
$\lambda _1,\lambda _2$, respectively, we obtain
\begin{align*}
\lambda _1^{1/2}\lambda _2^{1/2}\elbows{\phi _1,B_y\phi _2}&=\elbows{A_x^{1/2}\phi _1,B_yA_x^{1/2}\phi _2}=\elbows{\phi _1,A_x\circ B_y\phi _2}\\
   &=\frac{1}{n}\elbows{\phi _1,A_x\phi _2}=\frac{\lambda _2}{n}\elbows{\phi _1,\phi _2}=0
\end{align*}
Hence, $\elbows{\phi _1,B_y\phi _2}=0$ for all $y\in\Omega _B$. The corresponding result for $B_y$ is similar. Conversely, suppose the theorem's statements hold. Letting
$\brac{\phi _i}$ be an orthonormal basis of eigenvectors for $A_x$, if $\alpha\in H$ is an arbitrary vector, we can write $\alpha =\sum\alpha _i\phi _i$. Letting
$A_x\phi _i=\lambda _i\phi _i$, we obtain
\begin{align*}
\elbows{\alpha ,\frac{1}{n}\,A_x\alpha}&=\elbows{\sum\alpha _j\phi _j,\frac{1}{n}\sum\alpha _i\lambda _i\phi _i}=\frac{1}{n}\sum\ab{\alpha _i}^2\lambda _i\\
   &=\sum\ab{\alpha _i}^2\lambda _i\elbows{\phi _i,B_y\phi _i}=\elbows{\sum\alpha _j\lambda _j^{1/2}\phi _j,\sum\alpha _i\lambda _i^{1/2}B_y\phi _i}\\
   &=\elbows{\sum\alpha _jA_x^{1/2}\phi _j,\sum\alpha _iB_yA_x^{1/2}\phi _i}\\
   &=\elbows{\sum\alpha _j\phi _j,\sum\alpha _iA_x\circ B_y\phi _i}=\elbows{\alpha ,A_x\circ B_y\alpha}
\end{align*}
Hence, $A_x\circ B_y=\tfrac{1}{n}\,A_x$ and similarly, $B_y\circ A_x=\tfrac{1}{m}\,B_y$ for all $x\in\Omega _A$, $y\in\Omega _B$.\newline
(ii)\enspace Suppose Condition~(1) holds and $\elbows{\phi _,A_x\phi}=1$ for $x\in\Omega _A$ and unit vector $\phi$. Then by Schwarz's inequality we obtain
\begin{equation*}
1=\elbows{\phi ,A_x\phi}\le\doubleab{\phi}\,\doubleab{A_x\phi}=\doubleab{A_x\phi}\le 1
\end{equation*}
Hence, $\doubleab{A_x\phi}=1$ and we have equality $\elbows{\phi ,A_x\phi}=\doubleab{\phi}\,\doubleab{A_x\phi}$. It follows that $A_x\phi =c\phi$, $c\in\complex$. Since
$\ab{c}=\doubleab{c\phi}=1$ and $\phi$ is an eigenvector of $A_x$ we conclude that $c=1$. By Part~(i) we have that $\elbows{\phi ,B_y\phi}=1/n$ for every $y\in\Omega _B$. Similarly,  if $\elbows{\psi ,B_y\psi}=1$ for $y\in\Omega _B$ and unit vector $\psi$, then $\elbows{\psi ,A_x\psi}=1/m$ for every $x\in\Omega _A$. Applying Lemma~\ref{lem11}, we conclude that $A,B$ are value-complementary.
\end{proof}

When $A$ and $B$ are atomic, we obtain the following stronger results than Theorems~\ref{thm21} and \ref{thm22}.

\begin{thm}    
\label{thm23}
If $A=\brac{A_x\colon x\in\Omega _A}$, $B=\brac{B_y\colon y\in\Omega _B}$ are atomic observables, then the following statements are equivalent.
{\rm{(i)}}\enspace $A,B$ are MU.
{\rm{(ii)}}\enspace $A,B$ are value-complementary.
{\rm{(iii)}}\enspace Condition~(1) holds.\newline
{\rm{(iv)}}\enspace Condition~(2) holds.
\end{thm}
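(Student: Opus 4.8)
The plan is to exploit the atomic structure to collapse all four conditions onto the single scalar identity $\ab{\elbows{\phi _x,\psi _y}}^2=1/d$, and then to close the equivalences by routing through the theorems already proved rather than verifying each pairwise equivalence independently. First I would record the standing reductions: since $A,B$ are atomic on a $d$-dimensional space we may write $A_x=\ket{\phi _x}\bra{\phi _x}$ and $B_y=\ket{\psi _y}\bra{\psi _y}$ with $\brac{\phi _x}$, $\brac{\psi _y}$ orthonormal bases, so that $m=n=d$. (Orthonormality follows from $\sum _x\ket{\phi _x}\bra{\phi _x}=I$ by squaring and taking traces of the cross terms, which forces $\ab{\elbows{\phi _x,\phi _{x'}}}^2=0$ for $x\neq x'$; alternatively one invokes the paper's definition of atomic MU observables, which already presupposes bases indexed $1,\dots ,d$.)

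The computational heart is the identity $A_x\circ B_y=A_xB_yA_x=\ab{\elbows{\phi _x,\psi _y}}^2\,A_x$, which holds because $A_x^{1/2}=A_x$ and $\elbows{\phi _x,B_y\phi _x}=\ab{\elbows{\phi _x,\psi _y}}^2$, together with the symmetric statement $B_y\circ A_x=\ab{\elbows{\phi _x,\psi _y}}^2\,B_y$. Since $m=n=d$, Condition~(1) is \emph{exactly} the requirement that $\ab{\elbows{\phi _x,\psi _y}}^2=1/d$ for all $x,y$, which is precisely the MU condition. This gives (i)\,$\Leftrightarrow$\,(iii) at once, and is the only genuinely new computation in the proof.

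For the remainder I would chain the earlier results to form a cycle. Atomic observables are sharp, so Theorem~\ref{thm21}(iii) (equivalently Theorem~\ref{thm22}(ii)) gives (iii)\,$\Rightarrow$\,(ii). To complete the cycle I would prove (ii)\,$\Rightarrow$\,(i): taking the unit vector $\psi =\phi _x$, we have $A_x\phi _x=\phi _x$, hence $\elbows{\phi _x,A_x\phi _x}=1$, so Lemma~\ref{lem11} forces $\elbows{\phi _x,B_y\phi _x}=1/n=1/d$ for every $y$; but $\elbows{\phi _x,B_y\phi _x}=\ab{\elbows{\phi _x,\psi _y}}^2$, so $\ab{\elbows{\phi _x,\psi _y}}^2=1/d$ for all $x,y$, which is MU. Thus (i)\,$\Rightarrow$\,(iii)\,$\Rightarrow$\,(ii)\,$\Rightarrow$\,(i) establishes the equivalence of (i), (ii), (iii). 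Finally, (iii)\,$\Leftrightarrow$\,(iv) follows from Theorem~\ref{thm21}(i) and (ii), the latter applying because $A,B$ are sharp.

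I do not expect a serious obstacle here: the substance of the theorem is the observation that, in the atomic case, MU, value-complementarity, and Conditions~(1) and (2) all reduce to the transition-probability equation $\ab{\elbows{\phi _x,\psi _y}}^2=1/d$. The only points demanding care are the reduction $m=n=d$ (without which Condition~(1) and the MU condition would not literally coincide) and the bookkeeping ensuring the implications form a genuine cycle rather than a circular argument, which is why I route the logic as (i)\,$\Rightarrow$\,(iii)\,$\Rightarrow$\,(ii)\,$\Rightarrow$\,(i) together with the separate equivalence (iii)\,$\Leftrightarrow$\,(iv).
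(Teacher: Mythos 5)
Your proposal is correct and follows essentially the same route as the paper: the same key computation $A_x\circ B_y=\ab{\elbows{\phi _x,\psi _y}}^2A_x$, the same use of Lemma~\ref{lem11} with $\psi =\phi _x$ to get value-complementarity back to the MU condition, and the same appeal to Theorem~\ref{thm21} for the equivalence of Conditions~(1) and (2) in the sharp case. The only difference is bookkeeping (the paper proves (i)$\Leftrightarrow$(ii) directly and then (i)$\Rightarrow$(iii), whereas you cycle (i)$\Rightarrow$(iii)$\Rightarrow$(ii)$\Rightarrow$(i)), which is not a substantive change.
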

\begin{proof}
Let $A_x=\ket{\phi _x}\bra{\phi _x}$, $B_y=\ket{\psi _y}\bra{\psi _y}$ where $\brac{\phi _x}$, $\brac{\psi _y}$ are orthonormal bases for $H$. If $\brac{\phi _x}$, $\brac{\psi _y}$ are MU, then $\ab{\elbows{\phi _x,\psi _y}}^2=1/d$ for all $x\in\Omega _A$, $y\in\Omega _B$ where $d=\dim H$. If $\elbows{\psi ,A_x\psi}=1$ for $\doubleab{\psi}=1$, then
$\ab{\elbows{\psi ,\phi _x}}^2=1$. Therefore, $\psi =c\phi _x$ where $\alpha\in\complex$ with $\ab{c}=1$. It follows that $\ab{\elbows{\psi ,\psi _y}}=1/d$ so that 
$\elbows{\psi ,B_y\psi}=1/d$ for all $y\in\Omega _B$. Similarly, if $\elbows{\psi ,B_y\psi}=1$ then $\elbows{\psi ,A_x\psi}=1/d$ for all $x\in\Omega _A$. By Lemma~\ref{lem11}, $A$ and $B$ value-complementary. Conversely, suppose $A,B$ are value-complementary. Since
\begin{equation*}
\elbows{\phi _x,A_x\phi _x}=\doubleab{\phi _x}^2=1
\end{equation*}
we have that
\begin{equation*}
\ab{\elbows{\phi _x,\psi _y}}^2=\elbows{\phi _x,B_y\phi _x}=\frac{1}{d}
\end{equation*}
for all $y\in\Omega _B$. Hence, $\brac{\phi _x}$, $\brac{\psi _y}$ are MU. We conclude that (i) and (ii) are equivalent. Now suppose that (i) holds. We then have 
\begin{align*}
A_x\circ B_y&=A_xB_yA_x=\ket{\phi _x}\bra{\phi _x}\,\ket{\psi _y}\bra{\psi _y}\,\ket{\phi _x}\bra{\phi _x}
   =\ab{\elbows{\phi _x,\psi _y}}^2\ket{\phi _i}\bra{\phi _i}\\
   &=\frac{1}{d}\,A_x
\end{align*}
for all $x\in\Omega _A$, $y\in\Omega _B$. In a similar way, $B_y\circ A_x=\tfrac{1}{d}\,B_y$ for all $x\in\Omega _A$, $y\in\Omega _B$. Hence, (i) implies (iii). It follows from Theorem~\ref{thm21} that (iii) and (iv) are equivalent and that (iii) or (iv) imply (ii). Hence, the four statements are equivalent.
\end{proof}

\section{Finite Position and Momentum Observables}  
For this section we work in the standard Hilbert space $H=\complex ^N$. For $\psi =(\alpha _0,\alpha _1,\ldots ,\alpha _{N-1})\in H$, the \textit{finite Fourier transform} is the unitary operator $F$ on $H$ defined by \cite{hz12}
\begin{equation*} 
(F\psi )_j=\frac{1}{\sqrt{N\,}}\sum _{n=0}^{N-1}e^{-2\pi ijn/N}\alpha _n,i=\sqrt{-1\,},j=0,1,\ldots ,N-1
\end{equation*}
It follows that
\begin{equation*}
(F^*\psi )_j=\frac{1}{\sqrt{N\,}}\sum _{n=0}^{N-1}e^{2\pi ijn/N}\alpha _n, j=0,1,\ldots ,N-1
\end{equation*}
Let $\brac{\phi _k\colon k=0,1,\ldots ,N-1}$ be the standard basis $\phi _k=(0,\ldots ,0,1,0,\ldots ,0)$ where 1 is in the kth position, $k=0,1,\ldots ,N-1$. The jth component of $F\phi _k$ becomes
\begin{equation*}
(F\phi _k)_j=\frac{1}{\sqrt{N\,}}\sum _{n=0}^{N-1}e^{-2\pi ijn/N}(\phi _k)_n=\frac{1}{\sqrt{N\,}}\sum _{n=0}^{N-1}e^{-2\pi ijn/N}\delta _{jk}
   =\frac{1}{\sqrt{N\,}}\,e^{-2\pi ijk/N}
\end{equation*}
Relative to the standard basis, $F$ has the matrix representation
\begin{equation*}
\begin{bmatrix}F_{mn}\end{bmatrix}=\frac{1}{\sqrt{N\,}}\sqbrac{e^{-2\pi imn/\sqrt{N\,}}\,}
\end{equation*}

\begin{exam}  
Letting $N=2$, we have
\begin{align*}
F\phi _0&=\frac{1}{\sqrt{2\,}}\begin{bmatrix}1\\1\end{bmatrix},\ F\phi _1=\frac{1}{\sqrt{2\,}}\begin{bmatrix}1\\e^{-2\pi i/2}\end{bmatrix}
  =\frac{1}{\sqrt{2\,}}\begin{bmatrix}1\\-1\end{bmatrix},\\\noalign{\smallskip}
  F&=\frac{1}{\sqrt{2\,}}\begin{bmatrix}1&1\\1&-1\end{bmatrix}\hskip 20pc\square
\end{align*}
\end{exam}

\begin{exam}  
Letting $N=4$, we have
\begin{align*}
F\phi _0&=\frac{1}{2}\begin{bmatrix}1\\1\\1\\1\end{bmatrix},\ F\phi _1
  =\frac{1}{2}\begin{bmatrix}1\\e^{-2\pi i/4}\\e^{-2\pi i2/4}\\e^{-2\pi i3/4}\end{bmatrix}
  =\frac{1}{2}\begin{bmatrix}1\\-i\\-1\\i\end{bmatrix}\\\noalign{\smallskip}
  F\phi _2&=\frac{1}{2}\begin{bmatrix}1\\e^{-2\pi i2/4}\\e^{-2\pi i2\tbullet 2/4}\\e^{-2\pi i2\tbullet 3/4}\end{bmatrix}
  =\frac{1}{2}\begin{bmatrix}1\\-1\\1\\-1\end{bmatrix},
  F\phi _3=\frac{1}{2}\begin{bmatrix}1\\e^{-2\pi i3/4}\\e^{-2\phi i3\tbullet 2/4}\\e^{-2\pi i3\tbullet 3/4}\end{bmatrix}
  =\frac{1}{2}\begin{bmatrix}1\\i\\-1\\-i\end{bmatrix}\\\noalign{\smallskip}
  &F=\tfrac{1}{2}\begin{bmatrix}1&1&1&1\\1&-i&-1&i\\1&-1&1&-1\\1&i&-1&-i\end{bmatrix}\hskip 16pc\square
\end{align*}
\end{exam}

Define the atomic effects $Q_j\in\escript (H)$, $j=0,1,\ldots ,N-1$, given by $Q_j=\ket{\phi _j}\bra{\phi _j}$. The
\textit{finite position observable} is the atomic observable $Q=\brac{Q_0,Q_1,\ldots ,Q_{N-1}}$. Let $P_j$, $j=0,1,\ldots ,N-1$ be the atomic effects $P_j=FQ_jF^*$. The \textit{finite momentum observable} is the atomic observable $P=\brac{P_0,P_1,\ldots ,P_{N-1}}$. Notice that
\begin{equation*}
P_j=F\ket{\phi _j}\bra{\phi _j}F^*=\ket{F\phi _j}\bra{F\phi _j},\quad j=0,1,\ldots ,N-1
\end{equation*}

\begin{exam}  
For $N=2$ we can represent $Q_j$ by the matrices
\begin{equation*}
Q_0=\begin{bmatrix}1&0\\0&0\end{bmatrix},\ Q_1=\begin{bmatrix}0&1\\0&0\end{bmatrix}
\end{equation*}
and $P_j$ by the matrices
\begin{align*}
P_0&=FQ_0F^*=\frac{1}{2}\begin{bmatrix}1&1\\1&-1\end{bmatrix}\ \begin{bmatrix}1&0\\0&0\end{bmatrix}\ 
   \begin{bmatrix}1&1\\1&-1\end{bmatrix}=\frac{1}{2}\begin{bmatrix}1&1\\1&1\end{bmatrix}\\
   P_1&=FQ_1F^*=\frac{1}{2}\begin{bmatrix}1&1\\1&-1\end{bmatrix}\ \begin{bmatrix}0&0\\0&1\end{bmatrix}\ 
   \begin{bmatrix}1&1\\1&-1\end{bmatrix}=\frac{1}{2}\begin{bmatrix}1&-1\\-1&1\end{bmatrix}\hskip 4pc\square
\end{align*}
\end{exam}

\begin{exam}  
For $N=4$ we can represent $Q_j$ by the matrices
\begin{align*}
Q_0&=\begin{bmatrix}1&0&0&0\\0&0&0&0\\0&0&0&0\\0&0&0&0\end{bmatrix},\ 
Q_1=\begin{bmatrix}0&0&0&0\\0&1&0&0\\0&0&0&0\\0&0&0&0\end{bmatrix},\ 
Q_2=\begin{bmatrix}0&0&0&0\\0&0&0&0\\0&0&1&0\\0&0&0&0\end{bmatrix},\\\noalign{\smallskip}
Q_3&=\begin{bmatrix}0&0&0&0\\0&0&0&0\\0&0&0&0\\0&0&0&1\end{bmatrix}
\end{align*}
and $P_j$ by the matrices
\begin{align*} 
P_0&=\frac{1}{4}\begin{bmatrix}1\\1\\1\\1\end{bmatrix}{\raisebox{1.75pc}{$\begin{bmatrix}1&1&1&1\end{bmatrix}$}}
  =\frac{1}{4}\begin{bmatrix}1&1&1&1\\1&1&1&1\\1&1&1&1\\1&1&1&1\end{bmatrix},\\ 
  P_1&=\frac{1}{4}\begin{bmatrix}1\\-i\\-1\\i\end{bmatrix}{\raisebox{1.75pc}{$\begin{bmatrix}1&i&-1&-i\end{bmatrix}$}}
    =\frac{1}{4}\begin{bmatrix}1&i&-1&-i\\-i&1&i&-1\\-1&-i&1&i\\i&-1&-i&1\end{bmatrix},\\
P_2&=\frac{1}{4}\begin{bmatrix}1\\-1\\1\\-1\end{bmatrix}{\raisebox{1.75pc}{$\begin{bmatrix}1&-1&1&-1\end{bmatrix}$}}
  =\frac{1}{4}\begin{bmatrix}1&-1&1&-1\\-1&1&-1&1\\1&-1&1&-1\\-1&1&-1&1\end{bmatrix},\\ 
  P_3&=\frac{1}{4}\begin{bmatrix}1\\i\\-1\\-i\end{bmatrix}{\raisebox{1.75pc}{$\begin{bmatrix}1&-i&-1&i\end{bmatrix}$}}
    =\frac{1}{4}\begin{bmatrix}1&-i&-1&i\\i&1&-i&-1\\-1&i&1&-i\\-i&-1&i&1\end{bmatrix}\hskip 3pc\square
\end{align*}
\end{exam}

\begin{lem}    
\label{lem31}
{\rm{(i)}}\enspace If $a,b\in\escript (H)$ and $a$ is atomic, then $a\circ b=\rmtr (ab)a$.
{\rm{(ii)}}\enspace For all $j,k=0,1,\ldots ,N-1$ we have that $Q_j\circ P_k=\tfrac{1}{N}\,Q_j$ and $P_j\circ Q_k=\tfrac{1}{N}\,P_j$.
\end{lem}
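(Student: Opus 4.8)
The plan is to establish (i) directly from the definition of the sequential product, exploiting that an atomic effect is a projection, and then to deduce (ii) as an immediate application of (i) combined with the Fourier matrix entries already computed. For part (i), write the atomic effect as $a=\ket{\phi}\bra{\phi}$ with $\doubleab{\phi}=1$. Since $a$ is a one-dimensional projection it is idempotent and positive, so its unique positive square root is $a^{1/2}=a$. Hence the sequential product collapses to $a\circ b=a^{1/2}ba^{1/2}=aba$, and substituting $a=\ket{\phi}\bra{\phi}$ gives $aba=\elbows{\phi,b\phi}\ket{\phi}\bra{\phi}=\elbows{\phi,b\phi}\,a$. Finally, cyclicity of the trace yields $\rmtr(ab)=\rmtr(\ket{\phi}\bra{\phi}b)=\elbows{\phi,b\phi}$, so $a\circ b=\rmtr(ab)\,a$, as claimed.

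For part (ii), I would apply (i) with $a=Q_j$ (atomic by construction) and $b=P_k$, obtaining $Q_j\circ P_k=\rmtr(Q_jP_k)\,Q_j$, which reduces everything to the scalar $\rmtr(Q_jP_k)$. Writing $Q_j=\ket{\phi _j}\bra{\phi _j}$ and $P_k=\ket{F\phi _k}\bra{F\phi _k}$ and using cyclicity of the trace gives $\rmtr(Q_jP_k)=\ab{\elbows{\phi _j,F\phi _k}}^2$. Now $\elbows{\phi _j,F\phi _k}$ is precisely the $j$th component $(F\phi _k)_j=\tfrac{1}{\sqrt{N\,}}\,e^{-2\pi ijk/N}$ computed above, whose modulus squared is $1/N$ independently of $j,k$. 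Therefore $Q_j\circ P_k=\tfrac{1}{N}\,Q_j$. The second identity is symmetric: applying (i) with $a=P_j$, $b=Q_k$ gives $P_j\circ Q_k=\rmtr(P_jQ_k)\,P_j$, and $\rmtr(P_jQ_k)=\rmtr(Q_kP_j)=\ab{\elbows{\phi _k,F\phi _j}}^2=1/N$ by the identical computation, whence $P_j\circ Q_k=\tfrac{1}{N}\,P_j$.

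I expect no genuine obstacle in this argument; it is essentially a bookkeeping verification. The two points that deserve a moment's attention are the reduction $a^{1/2}=a$ for atomic $a$, which is what makes the sequential product explicitly computable, and the recognition that the previously derived Fourier matrix entries deliver the uniform value $1/N$ for every pair $j,k$. In effect, part (ii) records that the finite position and momentum observables $Q$ and $P$ are mutually unbiased, so the lemma is the concrete computational confirmation of the MU relation for this motivating example.
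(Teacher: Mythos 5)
Your argument is correct, and part (ii) follows the same route as the paper: reduce to the scalar $\rmtr(Q_jP_k)$ via part (i) and evaluate it using the explicit Fourier matrix entries, obtaining $1/N$ uniformly in $j,k$. The only genuine difference is in part (i). The paper does not use the identity $a^{1/2}=a$ to compute $a\circ b$ explicitly; instead it argues structurally: since $a\circ b\le a$ and $a$ is atomic, any effect dominated by a one-dimensional projection is a scalar multiple of it, so $a\circ b=\lambda a$ for some $\lambda\in[0,1]$, and then $\lambda=\rmtr(a\circ b)=\rmtr(ab)$ by cyclicity of the trace. Your direct computation $a\circ b=aba=\elbows{\phi,b\phi}\,a$ is more elementary and makes the scalar visible immediately, at the cost of leaning on the specific representation $a=\ket{\phi}\bra{\phi}$; the paper's order-theoretic argument is the one that would generalize if one only knew $a\circ b\le a$ and atomicity rather than having an explicit square root. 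Both are complete proofs, and your identification of $\elbows{\phi_j,F\phi_k}$ with the matrix entry $(F\phi_k)_j=\tfrac{1}{\sqrt{N}}e^{-2\pi ijk/N}$ is exactly the computation the paper performs (the paper's intermediate expression $\ab{F_{kk}^*}^2$ appears to be a typo for $\ab{F_{kj}^*}^2$, which your version avoids).
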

\begin{proof}
(i)\enspace Since $a\circ b\le a$ and $a$ is atomic, we have that $a\circ b=\lambda a$ for some $\lambda\in\sqbrac{0,1}$. Taking the trace gives
\begin{equation*} 
\lambda =\rmtr (\lambda a)=\rmtr (a\circ b)=\rmtr (aba)=\rmtr (ab)
\end{equation*}
(ii)\enspace
Applying (i) we obtain $Q_j\circ P_k=\rmtr (Q_jP_k)Q_j$. Since
\begin{align*}
\rmtr (Q_jP_k)&=\rmtr (Q_jFQ_kF^*)=\elbows{\phi _j,FQ_kF^*\phi _j}=\elbows{F^*Q_j,Q_kF^*\phi _j}\\
   &=\elbows{Q_kF^*\phi _j,Q_kF^*\phi _j}=\ab{F_{kk}^*}^2=\frac{1}{N}
\end{align*}
we have that $Q_j\circ P_k=\tfrac{1}{N}\,Q_j$ for all $j,k=0,1,\ldots ,N-1$. Similarly $P_j\circ Q_k=\tfrac{1}{N}\,P_j$ for all $j,k=0,1,\ldots ,N-1$.
\end{proof}

Applying Theorem~\ref{thm22} we obtain the following

\begin{cor}    
\label{cor32}
The observables $Q,P$ are MU and value-complementary. Moreover $(Q\mid P)=(P\mid Q)=\tfrac{1}{N}\,I$.
\end{cor}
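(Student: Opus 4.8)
The plan is to recognize that the corollary is almost entirely a repackaging of results already proved, once Lemma~\ref{lem31} is read in the language of Section~2. Since $H=\complex ^N$, both outcome spaces $\Omega _Q$ and $\Omega _P$ have exactly $N$ elements, and $d=\dim H=N$; thus in the Section~2 notation we have $m=n=d=N$. The first thing I would do is observe that Lemma~\ref{lem31}(ii), namely $Q_j\circ P_k=\tfrac{1}{N}\,Q_j$ and $P_j\circ Q_k=\tfrac{1}{N}\,P_j$ for all $j,k$, is \emph{verbatim} the assertion that the pair $(Q,P)$ satisfies Condition~(1). This identification is the whole engine of the proof.

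With Condition~(1) in hand, the two conclusions of the first sentence follow by citing prior results. Value-complementarity is immediate from Theorem~\ref{thm22}(ii), which draws exactly this conclusion from Condition~(1) with no sharpness hypothesis; this is the route flagged by the text preceding the corollary. Since $Q$ and $P$ are atomic, I can alternatively invoke Theorem~\ref{thm23}: its four statements are equivalent for atomic observables, and Condition~(1) is statement~(iii), so both MU (statement~(i)) and value-complementary (statement~(ii)) follow at once. The MU conclusion in particular requires the atomic setting, so Theorem~\ref{thm23} is the relevant tool there.

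For the ``Moreover'' clause, I would use that Condition~(2) is by definition $(P\mid Q)=\tfrac{1}{N}\,I$ and $(Q\mid P)=\tfrac{1}{N}\,I$ once $m=n=N$ is substituted. Since Condition~(1) holds, Theorem~\ref{thm21}(i) delivers Condition~(2) directly; equivalently, statement~(iv) of Theorem~\ref{thm23} yields the same thing. Either way the conditioned observables collapse to $\tfrac{1}{N}\,I$, which is precisely the final claim.

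There is very little genuine obstacle here, since all the analytic labor was absorbed into the trace computation of Lemma~\ref{lem31}(ii). The only point demanding care is the bookkeeping: one must confirm that $d=\dim H=N$ coincides with both $m$ and $n$, so that the single factor $1/N$ produced by Lemma~\ref{lem31} simultaneously matches the $1/n$ and $1/m$ appearing in Conditions~(1) and~(2). Once that identification is made, the corollary is a direct consequence of assembling Theorems~\ref{thm21}, \ref{thm22}, and~\ref{thm23}.
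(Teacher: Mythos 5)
Your proposal is correct and follows essentially the same route as the paper, which simply states ``Applying Theorem~\ref{thm22} we obtain the following'' after Lemma~\ref{lem31}: you identify Lemma~\ref{lem31}(ii) as Condition~(1) with $m=n=d=N$ and then cite the Section~2 results. If anything, you are more careful than the paper, since the MU claim is not actually delivered by Theorem~\ref{thm22} alone and your appeal to Theorem~\ref{thm23} (or to the direct observation that $\ab{\elbows{\phi_j,F\phi_k}}^2=1/N$) is needed to close that part.
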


Since $Q,P$ are MU, any parts $Q'\subseteq Q$, $P'\subseteq P$ are part MU. This is illustrated in the following examples.

\begin{exam}  
For $N=4,$ let $Q'_0=Q_0+Q_1$, $Q'_1=Q_2+Q_3$, $P'_0=P_0+P2$, $P'_1=P_1+P_3$. Then $Q'=\brac{Q'_0,Q'_1}\subseteq Q$,
$P'=\brac{P'_0,P'_1}\subseteq P$ are part MU. We have that
\begin{align*}
Q'_0&=\begin{bmatrix}1&0&0&0\\0&1&0&0\\0&0&0&0\\0&0&0&0\end{bmatrix},\qquad
Q'_1=\begin{bmatrix}0&0&0&0\\0&0&0&0\\0&0&1&0\\0&0&0&1\end{bmatrix}\\
P'_0&=\frac{1}{2}\begin{bmatrix}1&0&1&0\\0&1&0&1\\1&0&1&0\\0&1&0&1\end{bmatrix},\qquad
P'_1=\frac{1}{2}\begin{bmatrix}1&0&-1&0\\0&1&0&-1\\-1&0&1&0\\0&-1&0&1\end{bmatrix}
\end{align*}
The sequential products become
\begin{align*} 
Q'_0\circ P'_0&=Q'_0P'_0Q'_0=\frac{1}{2}\begin{bmatrix}1&0&0&0\\0&1&0&0\\0&0&0&0\\0&0&0&0\end{bmatrix}=\frac{1}{2}\,Q'_0\\
\intertext{and}
Q'_0\circ P'_1&=Q'_0(I-P'_0)Q'_0=Q'_0-Q'_0P'_0Q'_0=Q'_0-\frac{1}{2}\,Q'_0=\frac{1}{2}\,Q'_0
\end{align*}
In a similar way, we obtain $Q'_1\circ P'_0=\tfrac{1}{2}\,Q'_1$, $Q'_1\circ P'_1=\tfrac{1}{2}\,Q'_1$. We also have
\begin{equation*}
P'_0\circ Q'_0=P'_0Q'_0P'_0=\frac{1}{4}\begin{bmatrix}1&0&1&0\\0&1&0&1\\1&0&1&0\\0&1&0&1\end{bmatrix}=\frac{1}{2}\,P'_0
\end{equation*}
and as before $P'_0\circ Q'_1=\tfrac{1}{2}\,P'_0$. Moreover, we obtain
\begin{equation*} 
P'_1\circ Q'_0=P'_1Q'_0P'_1=\frac{1}{2}\begin{bmatrix}1&0&-1&0\\0&1&0&-1\\-1&0&1&0\\0&-1&0&1\end{bmatrix}=\frac{1}{2}\,P'_1
\end{equation*}
and as before $P'_1\circ Q'_1=\tfrac{1}{2}\,P'_1$. We conclude that $Q'_j\circ P'_k=\tfrac{1}{2}\,Q'_j$ and
$P'_k\circ Q'_j=\tfrac{1}{2}\,P'_k$ for $j,k=0,1$. It follows from Theorem~\ref{thm21} that $Q',P'$ are value-complementary.\hfill\qedsymbol
\end{exam}
 
\begin{exam}  
For $N=4,$ let $Q'_0=Q_0+Q_1$, $Q'_1=Q_2+Q_3$, as in Example~5, but we now let $P''_0=P_0+P_1$, $P''_1=P_2+P_3$. We again have that $Q'=\brac{Q'_0,Q'_1}\subseteq Q$, $P''=\brac{P''_0,P''_1}\subseteq P$ are part MU. In this case, $P''_0$ and $P''_1$ are given by
\begin{align*}
P''_0&=\frac{1}{4}\begin{bmatrix}2&1+i&0&1-i\\1-i&2&1+i&0\\0&1-i&2&1+i\\1+i&0&1-i&2\end{bmatrix},\\\noalign{\smallskip}
P''_1&=\frac{1}{4}\begin{bmatrix}2&-1-i&0&-1+i\\-1+i&2&-1-i&0\\0&-1+i&2&-1-i\\-1-i&0&-1+i&2\end{bmatrix}
\end{align*}
The sequential product becomes
\begin{equation*} 
Q'_0\circ P''_0=Q'_0P''_0Q'_0=\frac{1}{4}\begin{bmatrix}2&1+i&0&0\\1-i&2&0&0\\0&0&0&0\\0&0&0&0\end{bmatrix}
\end{equation*}
so $Q'_0\circ P''_0\ne\tfrac{1}{2}\,Q'_0$ which is different from Example~5. In a similar way $Q'_0\circ P''_1\ne\tfrac{1}{2}\,Q'_0$,
$P''_0Q'_0\ne\tfrac{1}{2}\,P''_0$ and $P''_1\circ Q'_1\ne\tfrac{1}{2}\,P''_1$. Hence, $Q',P''$ do not satisfy Condition~(1). We now show that $Q',P''$ do not satisfy Condition~(2). (Of course, it would follow that $Q',P''$ do not satisfy Condition~(1).) Indeed, since
\begin{equation*} 
Q'_0\circ P''_0=\frac{1}{4}\begin{bmatrix}0&0&0&0\\0&0&0&0\\0&0&2&1+i\\0&0&1-i&2\end{bmatrix}
\end{equation*}
we obtain
\begin{equation*} 
Q'_0\circ P''_0+Q'_0P''_0=\frac{1}{4}\begin{bmatrix}2&1+i&0&0\\1-i&2&0&0\\0&0&2&1+i\\0&0&1-i&2\end{bmatrix}\ne\frac{1}{2}\,I
\end{equation*}
Hence, $Q',P''$ do not satisfy Condition~(2). We now show that $Q',P''$ are not value-complementary. Although MU observables are value-complementary, this would show that part MU observables need not be value-complementary. Suppose $\elbows{\psi ,Q'_0\psi}=1$ with $\doubleab{\psi}=1$. It follows that $\psi =(a,b,0,0)$ where $a,b\in\complex$ with $\ab{a}^2+\ab{b}^2=1$. We then have
\begin{equation*} 
P''_0\psi=\frac{1}{4}\begin{bmatrix}2a+(1+i)b\\(1-i)a+2b\\(1-i)b\\(1+i)a\end{bmatrix}
\end{equation*}
Hence,
\begin{align*}
\elbows{\psi ,P''_0\psi}&=\frac{1}{4}\sqbrac{2\ab{a}^2+(1+i)\abar\,b+(1-i)a\bbar+2\ab{b}^2}\\
   &=\frac{1}{4}\sqbrac{2+(1+i)\abar\,b+(1-i)a\bbar\,}
\end{align*}
In general, $\elbows{\psi ,P''_0\psi}\ne 1/2$. For example, if $a=b=1/\sqrt{2\,}$ we obtain
\begin{equation*} 
\elbows{\psi ,P''_0\psi}=\frac{1}{4}\sqbrac{2+\frac{1}{2}(1+i)+\frac{1}{2}(1-i)}=\frac{3}{4}
\end{equation*}
We conclude that $Q',P''$ are not value-complementary.\hfill\qedsymbol
\end{exam}  

\section{Concluding Remarks}  
Most of our work here has involved sharp observables. Can any of these results be extended to unsharp observables? In particular, can we extend the definition of MU observables to nonatomic or unsharp observables? Is there a relationship between these more general MU observables and value-complementary observables? What about parts of these more general MU observables? In this section, we begin a study of these questions.

Let $A=\brac{A_x\colon x\in\Omega _A}$, $B=\brac{B_y\colon y\in\Omega _B}$ be observables on $H$ and let $\dim H=d$,
$\ab{\Omega _A}=m$, $\ab{\Omega _B}=n$. We say that $A,B$ are \textit{generalized} MU \textit{observables} if $\rmtr (A_xB_y)=\alpha$ for all $x\in\Omega _A$, $y\in\Omega _B$ where $\alpha\in\real$. Notice that when this condition holds, then $\alpha =d/mn$. This is because
\begin{equation*} 
d=\sum _{x,y}\rmtr (A_xB_y)=\sum _{x,y}\alpha =mn\alpha
\end{equation*}
The next result shows that this definition reduces to the usual definition of MU observables when $A$ and $B$ are atomic.

\begin{thm}    
\label{thm41}
{\rm{(i)}}\enspace If $A_x=\ket{\phi _x}\bra{\phi _x}$ and $B_y=\ket{\psi _y}\bra{\psi _y}$ are atomic then $A,B$ are generalized MU if and only if $A,B$ are MU.
{\rm{(ii)}}\enspace If $A,B$ satisfy Condition~(1) then $A,B$ are generalized MU.
\end{thm}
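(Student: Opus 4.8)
The plan is to treat the two parts separately, since each reduces to a short trace computation.

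For part~(i), I would first record that for atomic effects $A_x=\ket{\phi_x}\bra{\phi_x}$ and $B_y=\ket{\psi_y}\bra{\psi_y}$ one has $A_xB_y=\elbows{\phi_x,\psi_y}\ket{\phi_x}\bra{\psi_y}$, whence $\rmtr(A_xB_y)=\ab{\elbows{\phi_x,\psi_y}}^2$. Since the completeness relation $\sum_xA_x=I$ forces the unit vectors $\phi_x$ to be mutually orthogonal, $\brac{\phi_x}$ is an orthonormal basis of $H$ and in particular $m=d$; likewise $n=d$. Hence the constant forced by the generalized MU condition is $\alpha=d/mn=1/d$, and $A,B$ are generalized MU precisely when $\ab{\elbows{\phi_x,\psi_y}}^2=1/d$ for all $x,y$. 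This is exactly the condition that $\brac{\phi_x}$ and $\brac{\psi_y}$ be MU, i.e.\ that $A,B$ be MU, so both implications follow at once.

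For part~(ii), I would take traces of the two halves of Condition~(1). By cyclicity of the trace, $\rmtr(A_x\circ B_y)=\rmtr(A_x^{1/2}B_yA_x^{1/2})=\rmtr(A_xB_y)$, so $A_x\circ B_y=\tfrac{1}{n}A_x$ gives $\rmtr(A_xB_y)=\tfrac{1}{n}\rmtr(A_x)$, while $B_y\circ A_x=\tfrac{1}{m}B_y$ gives $\rmtr(A_xB_y)=\tfrac{1}{m}\rmtr(B_y)$. The decisive step is to play these traced identities against the normalizations $\sum_xA_x=I$ and $\sum_yB_y=I$: summing $\rmtr(A_xB_y)=\tfrac{1}{n}\rmtr(A_x)$ over $x$ collapses the left side to $\rmtr(B_y)$ and the right side to $\tfrac{1}{n}\rmtr(I)=d/n$, so $\rmtr(B_y)=d/n$ for every $y$, and symmetrically $\rmtr(A_x)=d/m$ for every $x$. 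Substituting either value back into the corresponding traced identity yields $\rmtr(A_xB_y)=d/mn$ for all $x,y$, a constant, so $A,B$ are generalized MU.

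I do not anticipate a genuine obstacle, as both arguments are elementary. The only points requiring a little care are noting in part~(i) that atomicity forces $m=n=d$ (so that $\alpha$ collapses to the MU normalization $1/d$), and observing in part~(ii) that neither half of Condition~(1) alone determines the individual traces $\rmtr(A_x)$ and $\rmtr(B_y)$: it is the summation of one traced identity against the completeness relation that pins down $\rmtr(B_y)=d/n$, after which the second traced identity immediately delivers the constant value $d/mn$.
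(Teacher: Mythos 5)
Your proposal is correct and follows essentially the same route as the paper: part (i) reduces to $\rmtr(A_xB_y)=\ab{\elbows{\phi_x,\psi_y}}^2$ with $m=n=d$, and part (ii) traces both halves of Condition~(1), sums one identity over $x$ against $\sum_xA_x=I$ to get $\rmtr(B_y)=d/n$, and substitutes back to obtain the constant $d/mn$. Your explicit remark that atomicity forces $m=n=d$ is a small clarification the paper leaves implicit, but the argument is the same.
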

\begin{proof}
(i)\enspace If $A,B$ are MU we have
\begin{equation*} 
\rmtr (A_xB_y)=\ab{\elbows{\phi _x,\psi _y}}^2=\frac{1}{d}
\end{equation*}
for all $x\in\Omega _A$, $y\in\Omega _B$. Conversely, if $A,B$ are generalized MU then
\begin{equation*}
\ab{\elbows{\phi _x,\psi _x}}^2=\rmtr (A_xB_y)=\frac{d}{mn}=\frac{1}{d}
\end{equation*}
(ii)\enspace If $A,B$ satisfy Condition~(1), then $A_x\circ B_y=\tfrac{1}{n}\,A_x$ and $B_y\circ A_x=\tfrac{1}{m}\,B_y$ for all
$x\in\Omega _A$, $y\in\Omega _B$. We then obtain
\begin{equation*}
\frac{1}{n}\,\rmtr (A_x)=\rmtr (A_x\circ B_y)=\rmtr (A_xB_y)=\rmtr (B_yA_x)=\rmtr (B_y\circ A_x)=\frac{1}{m}\,\rmtr (B_y)
\end{equation*}
for all $x\in\Omega _A$, $y\in\Omega _B$. Since $\rmtr (A_xB_y)=\tfrac{1}{n}\,\rmtr (A_x)$ we obtain
\begin{equation*}
\rmtr (B_y)=\sum _x\rmtr (A_xB_y)=\frac{1}{n}\sum _x\rmtr (A_x)=\frac{1}{n}\,\rmtr (I)=\frac{d}{n}
\end{equation*}
Hence,
\begin{equation*}
\rmtr (A_xB_y)=\frac{1}{m}\,\rmtr (B_y)=\frac{d}{mn}
\end{equation*}
for all $x\in\Omega _A$, $y\in\Omega _B$. Therefore, $A,B$ are generalized MU.
\end{proof}

The next result characterizes when part MU observables are generalized MU observables.

\begin{lem}    
\label{lem42}
If $f(A)$, $g(B)$ are parts of the MU observables $A,B$, then $f(A)$, $g(B)$ are generalized MU observables if and only if
\begin{equation*}
\ab{f^{-1}(r)}\ab{g^{-1}(s)}=\ab{f^{-1}(r')}\ab{g^{-1}(s')}
\end{equation*}
for all $r,r'\in\Omega _{f(A)}$ and $s,s'\in\Omega _{g(B)}$.
\end{lem}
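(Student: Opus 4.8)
The plan is to compute the generalized-MU quantity $\rmtr\sqbrac{f(A)_rg(B)_s}$ explicitly and then read off the stated condition as the requirement that this quantity be independent of $r$ and $s$. Since $A,B$ are atomic and MU, write $A_x=\ket{\phi _x}\bra{\phi _x}$, $B_y=\ket{\psi _y}\bra{\psi _y}$ with $\ab{\elbows{\phi _x,\psi _y}}^2=1/d$ for all $x,y$. By the definition of a part, $f(A)_r=\sum\brac{A_x\colon f(x)=r}$ and $g(B)_s=\sum\brac{B_y\colon g(y)=s}$, the sums running over $f^{-1}(r)$ and $g^{-1}(s)$ respectively.

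First I would use linearity of the trace to expand
\begin{equation*}
\rmtr\sqbrac{f(A)_rg(B)_s}=\sum _{x\in f^{-1}(r)}\sum _{y\in g^{-1}(s)}\rmtr (A_xB_y).
\end{equation*}
The key step is then to invoke the MU property in the form $\rmtr (A_xB_y)=\ab{\elbows{\phi _x,\psi _y}}^2=1/d$, which is constant over all pairs $(x,y)$. This constancy is exactly what lets the double sum factor, giving
\begin{equation*}
\rmtr\sqbrac{f(A)_rg(B)_s}=\frac{1}{d}\,\ab{f^{-1}(r)}\,\ab{g^{-1}(s)}.
\end{equation*}

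Finally I would match this against the definition: $f(A),g(B)$ are generalized MU precisely when $\rmtr\sqbrac{f(A)_rg(B)_s}$ takes the same value for all $r\in\Omega _{f(A)}$, $s\in\Omega _{g(B)}$. By the displayed formula this holds if and only if $\ab{f^{-1}(r)}\,\ab{g^{-1}(s)}=\ab{f^{-1}(r')}\,\ab{g^{-1}(s')}$ for all $r,r'\in\Omega _{f(A)}$ and $s,s'\in\Omega _{g(B)}$, which is the asserted equivalence. The argument is essentially a one-line calculation; the only point requiring care is that the MU hypothesis enters as the statement that $\rmtr (A_xB_y)$ is constant in \emph{both} indices, since it is precisely this uniformity that allows the double sum to split as a product of the two preimage cardinalities. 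There is no real obstacle beyond correctly tracking the factor $1/d$ and observing that it cancels out of the constancy condition.
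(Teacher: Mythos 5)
Your proposal is correct and follows essentially the same route as the paper: expand $\rmtr\sqbrac{f(A)_rg(B)_s}$ by linearity, use the MU hypothesis to replace each $\rmtr (A_xB_y)$ by $1/d$, and obtain $\ab{f^{-1}(r)}\ab{g^{-1}(s)}/d$, from which the stated equivalence is immediate. No gaps; your added remark about why the constancy in both indices is what makes the sum factor is a fair gloss on the same computation.
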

\begin{proof}
This follows from the equation:
\begin{align*}
\rmtr\sqbrac{f(A)_r,g(B)_s}&=\rmtr\sqbrac{\sum\brac{A_x\colon f(x)=r}\sum\brac{B_y\colon g(y)=s}}\\
  &=\sum _{f(x)=r}\sum _{g(y)=s}\rmtr (A_xB_y)=\sum _{f(x)=r}\sum _{g(y)=s}\ab{\elbows{\phi _x,\psi _y}}^2\\\noalign{\smallskip}
  &=\sum _{f(x)=r}\sum _{g(y)=s}\frac{1}{d}=\frac{\ab{f^{-1}(r)}\ab{g^{-1}(s)}}{d}
\end{align*}
\end{proof}

Applying Lemma~\ref{lem42} we see that the observables $Q',P'$ of Example~5 and the observables $Q',P''$ of Example~6 are generalized MU. Also, Example~6 shows that generalized MU observables need not satisfy Condition~(1) or Condition~(2) or be value-complementary.

\begin{exam}  
The generalized MU observables of Examples~5 and 6 are sharp. We now give a simple example of a pair of generalized MU observables that are unsharp. Let $A$ be the unsharp observable given by $A=\brac{\tfrac{1}{d}\,I,\tfrac{1}{d}\,I,\ldots ,\tfrac{1}{d}\,I}$ where there are
$d$ effects and let $B$ be an unsharp observable given by $B=\brac{B_1,B_2,\ldots ,B_m}$ where $\rmtr (B_j)=a$ for $j=1,2,\ldots ,m$. Since $\rmtr (A_iB_j)=a/d$ for all $i=1,2,\ldots ,d$ and $j=1,2,\ldots ,m$, we have that $A,B$ are generalized MU
observables.\hfill\qedsymbol
\end{exam}  

Another possible approach for defining a pair of generalized MU observables is to use Condition~(1). According to
Theorem~\ref{thm41}(ii) this has the advantage of being a stronger condition than our present definition. In fact, as noted in the previous paragraph, Condition~(1) is a strictly stronger definition. Moreover, by Theorem~\ref{thm23}, Condition~(1) also reduces to the usual criterion for atomic observables. Finding the best definition for generalized MU observables requires further research. A simple example of unsharp observables satisfying Condition~(1) are the \textit{trivial} observables $A,B$ where $A_x=\tfrac{1}{m}\,I$, $B_y=\tfrac{1}{n}\,I$ for all $x\in\Omega _A$, $y\in\Omega _B$. Notice that if $A,B$ satisfy Condition~(1), then they are trivial if and only if $A_x$ or $B_y$ is invertible for some $x\in\Omega _A$, $y\in\Omega _B$. Indeed, if $A$ and $B$ are trivial, then clearly they are invertible. Conversely, suppose $A_x$ is invertible for some $x\in\Omega _A$. Then since $A_x^{1/2}B_yA_x^{1/2}=\tfrac{1}{n}\,A_x$, multiplying on both sides with $A_x^{-1/2}$ gives $B_y=\tfrac{1}{n}\,I$ for all $y\in\Omega _B$. Moreover, since $B_y^{1/2}A_xB_y^{1/2}=\tfrac{1}{m}\,B_y$ we obtain
\begin{equation*}
\frac{1}{n}\,A_x=B_y^{1/2}A_xB_y^{1/2}=\frac{1}{mn}\,I
\end{equation*}
Hence, $A_x=\tfrac{1}{m}\,I$ for $x\in\Omega _A$. If $B_y$ is invertible for some $y\in\Omega _B$ the result is similar.

\end{document}